\newtheorem{theorem}{Theorem}
\newtheorem{corollary}[theorem]{Corollary}
\newtheorem{lemma}[theorem]{Lemma}
\newtheorem{remark}[theorem]{Remark}
\newenvironment{proof}{ \textbf{Proof:} }{ \hfill $\Box$}
\def\bb0{{\mathbb{0}}}
\def\1{{\mathbf{1}}}
\def\bb{{\mathbf{b}}}
\def\b0{{\mathbf{0}}}
\def\bbE{{\mathbb{E}}}
\def\cL{\mathcal{L}}
\def\sfM{\mathsf{M}}
\def\b1{\mathsf{1}}
\def\sfr{{\mathsf{r}}}
\def\sf0{{\mathsf{0}}}
\title{Optimal WiFi Sensing via Dynamic Programming}
\author{Abhinav Kumar, Rahul Vaze, Sibi Raj B Pillai, Aditya Gopalan}
\begin{document}
\maketitle
\begin{abstract} The problem of finding an optimal sensing schedule for a mobile device that encounters an intermittent WiFi access opportunity is considered. 
At any given time, the WiFi is in any of the two modes, ON or OFF, and the mobile's incentive is to connect to the WiFi in the ON mode as soon as possible,
 while spending as little sensing energy. We introduce a dynamic programming framework which enables the characterization of an explicit solution for
several models, particularly when the OFF periods are exponentially distributed.   

While the problem for non-exponential OFF periods is ill-posed in general, a usual workaround in literature is to make the mobile device aware
if one ON period is completely missed. In this restricted setting, using the DP framework, the deterministic nature of the optimal sensing policy is established, and value iterations are shown to converge to the optimal solution.
Finally, we address the blind situation where the distributions of ON and OFF periods are unknown. A continuous bandit based learning algorithm 
that has vanishing regret (loss compared to the optimal strategy with the knowledge of distributions) is presented,  and comparisons with 
the optimal schemes are provided for exponential ON and OFF times.
\end{abstract}

\section{Introduction}
The available WiFi connectivity in mobile environments can be intermittent. 
In an effort to maximize WiFi connectivity time, current smartphones keep scanning/sensing for WiFi connection quite frequently, however, they loose precious battery life in this process.  The sensing schedule clearly depends on the distributions of the ON and OFF periods of the WiFi APs. This paper is an effort in finding the optimal sensing periods  given the knowledge of the ON and OFF period distributions. 

Given a geographical area with a fixed number of WiFi APs and a roaming mobile, the WiFi connection opportunity can be modeled as a two-state Markov chain with $\{\text{ON, OFF}\}$ states. In \cite{kim2011improving}, it is shown that the ON and OFF periods can be well approximated by exponential distributions. Without explicitly counting for the sensing cost, \cite{kim2011improving} also found the optimal sensing durations that minimize the  rate of missed ON periods. The analysis, however, is not completely rigorous, for example, the missed ON period in a given time period does not depend on the length of the time period, which is anomalous.

A natural metric for finding the optimal sensing duration is the sum of the expected length of the missed ON periods and the expected sensing cost \cite{Yung2013}. Even though \cite{Yung2013} considered this metric, however, for analysis, the metric was simplified, for example by replacing some of the random variables with their expectations.
Optimal solutions to these approximations for general ON and OFF distributions were presented in \cite{Yung2013}.
%
Some heuristic solutions \cite{kim2011improving, wang2009opportunistic} have also been found that modulate the sensing durations given the frequency of failure of detection. Some other practical smart sensing protocols for WiFi sensing can be found in \cite{zhou2010zifi,wu2009footprint}. Sensing in cognitive radio is also similar to this work \cite{SwamiCR}, however there, 
the unlicensed users sense to maximize their throughput without harming the licensed users. The cognitive radio setting also leads to a partially observed Markov decision process.

A critical assumption in  \cite{Yung2013} is that the system is reset 
if one complete ON interval is lost/missed because of no sensing epoch lying in that ON period. This assumption is particularly required when the distribution of the OFF periods is not exponential, since  otherwise the problem becomes ill-posed. See remark \ref{rem:nonexp} for a detailed explanation. Under this assumption, the problem is restricted to one OFF and one ON period, 
where a policy schedules the channel senses till the first ON state is detected or missed.

In this paper, we consider the metric as the sum of the expected length of the missed ON periods and the expected sensing cost similar to \cite{Yung2013}. 
Unlike \cite{Yung2013}, \cite{kim2011improving}, our approach relies on a dynamic programming formulation.
We solve for the general problem when the OFF period is exponentially distributed, while the ON periods are IID with 
any arbitrary distribution. The DP framework also allows us to rectify the anamolies in the past work concerning exponential ONs and OFFs
\cite{kim2011improving}.
%
For the non exponentially distributed OFF periods, we consider the restriction of one OFF and ON period similar to \cite{Yung2013}, but do not change the metric to suit analysis as done in \cite{Yung2013}. Again posing the problem as a dynamic program,  we obtain  structural results that show that the optimal policy is deterministic, and which can be found via value iteration that is shown to converge to the optimal solution.  
The restricted problem can be seen as  a generalization of \cite{azad2011optimal}, where the ON period never expires.

Almost all prior work on smart WiFi sensing assumes the knowledge of the distribution of the OFF and ON period distributions. In practice, that can be obtained only via training, however, is costly in terms of resources. To overcome this, we propose a blind learning framework, where the learning algorithm learns the optimal sensing duration iteratively, without any training.
The proposed algorithm is inspired by algorithms for continuous bandit problems \cite{kleinberg2004nearly, auer2007improved}, where each agent has a continuum of strategies to choose and its objective is to maximize a reward function, however, it does not know the reward distribution conditioned on its choice. We show that the proposed algorithm (following \cite {auer2007improved}) for finding the optimum sensing durations has a vanishing regret as a function of time, where regret is defined as the difference between the reward of an optimal algorithm with the knowledge of the distribution, and the blind learning algorithm. For lack of space, we illustrate the vanishing regret only when the underlying OFF-ON period distribution is exponential, but it easily applies for any other distribution.

\section{System Model}
\label{sec:sysmod}

Consider a mobile device that is moving in and out of WiFi APs' transmission radii, and encounters intermittent WiFi access opportunities in time, as shown in Fig. \ref{fig:model:1}. We assume that at time $t$, AP state is OFF 
if the mobile device is not in any AP's transmission radius, and ON otherwise. Thus, as shown in Fig. \ref{fig:model:1}, the mobile sees alternating ON and OFF periods, where it can receive data only in the ON periods. 

  \begin{figure}[h]
\centering
\includegraphics[width=0.44\textwidth]{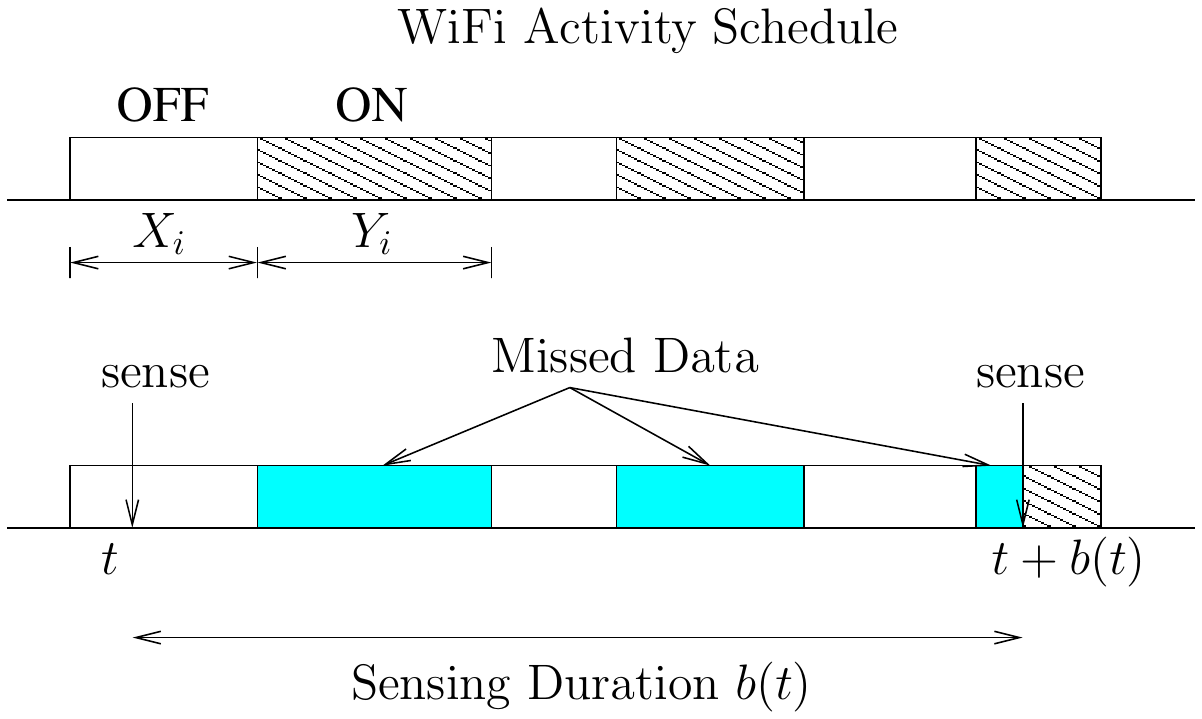}
\caption{System model description \label{fig:model:1}}
\end{figure}
To detect  ON periods, the mobile device employs sensing. If on sensing at  time $t$, the AP state is found ON, the device gets connected to the 
AP till the end of that ON period. We assume that the device learns about the disconnection as soon as the ON period is over, by using either the rapid increase in error probability or no useful data transmission. 
Otherwise, if on sensing at time $t$, the AP state is OFF, then the mobile decides to sleep and decides on the duration of the next sensing epoch $t+b(t)$, as shown in Fig. \ref{fig:model:1}. 

To save on energy, mobile senses intermittently, and consequently loses out on connecting to the AP as soon as the ON period starts. In particular, the shaded region (missed data) in Fig. \ref{fig:model:1} represents the lost opportunity because of intermittent sensing. Longer sleep periods incur less sensing power consumption
but decrease the WiFi connectivity time utilized, while shorter sleep periods increase the WiFi connectivity time at the cost of increasing the sensing power consumption.
 
To strike a balance between the lost ON period time and the sensing power consumption, we consider the problem of
finding the sensing intervals so as to minimize the sum of expected lost opportunity for data reception and the expected power for sensing. Now we make this formal. 

Let the duration of the $i^{th}, i\ge 1$ OFF and ON period be denoted by $X_i$ and $Y_i$, respectively, as shown in Fig. \ref{fig:model:1}. We assume that both $X_i$ and $Y_i$ are independent for $i\ge 1$. The PDF of $X$ and $Y$ is denoted by $f_d(x)$ and $f_c(y)$, where the subscript $d$ and $c$ represent disconnection and connection, respectively. 

If a sensing reveals the AP state to be ON, there is no decision to make, and the mobile device stays connected from there on till the end of the current ON period, and get disconnected at the end of it, and the system restarts.  
The non-trivial decision problem is when the current sense reveals the AP state to be OFF. 
We define an ON period to be a {\it discovered} ON period, if a sensing epoch lies in that ON period. In a discovered ON period, {\it useful} ON time is the time between the sensing epoch and the end of the discovered ON period. An illustration is provided in Fig. \ref{fig:learning:2}.
Time period between the end of two consecutive discovered ON periods is defined to be a {\it session}. 
Recall that system resets at the the end of each discovered ON period, thus we focus on any one particular session here onwards.

  \begin{figure}[h]
\centering
\includegraphics[width=0.44\textwidth]{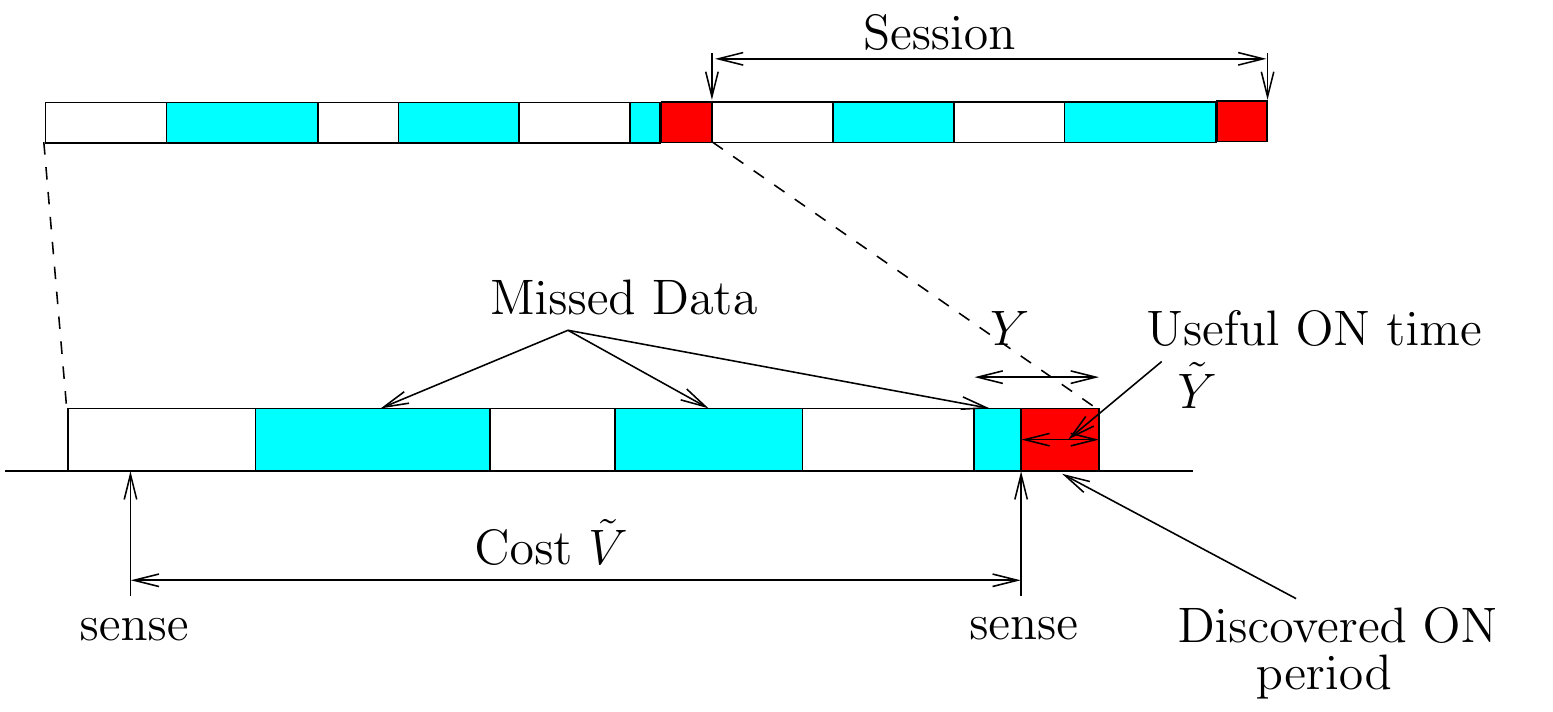} 
\caption{Illustration of sessions for learning algorithm \label{fig:learning:2}}
\end{figure}

Let $\1_{\mathsf{on}}(t)$ ($\1_{\mathsf{off}}(t)$) denote the event that the AP is in ON (OFF) state at time $t$.  
$P_{\mathsf{off}}(t)$ be the probability that the AP is in OFF state at time $t$, and $P_{\mathsf{off}}(t+x| t\in \mathsf{*})$, be the probability that the AP is in OFF state at time $t+x$ given that AP is in state $*, *\in \{\text{OFF, ON}\}$ state at time $t$. 
We define the cost between two sensing epochs at $t$ and $t+b(t)$ in a session as 
\begin{equation}\label{eq:running cost}
c(t,b(t)) = c_s + M(t,b(t)),
\end{equation}
where $c_s$ is fixed sensing cost, and $M(t,b(t))$ is the missed/lost ON time between time $t$ and $t+b(t)$.
Then, the sensing problem can be cast as a dynamic problem (DP), 
\begin{equation}\label{eq:dp}
V(t) = \max_{b(t)\ge 0} [\bbE\{c(t,b(t))\} + P_{\mathsf{off}}(t+b(t)| t\in \mathsf{off}) V(t+b(t))],
\end{equation}
where we have assumed that if at time $t$ $b(t)$ is selected as the next sensing duration, the running cost is $\bbE\{c(t,b(t))\}$, and the process restarts if AP is in OFF state at time $t+b(t)$, which happens with probability $P_{\mathsf{off}}(t+b(t)| t\in \mathsf{off})$. 

Given that an OFF period is going on at time $t$, we define the residual OFF time (time of the completion of OFF period) at time $t$ as $r_t$, that has CDF  $F_{r_t}(x) = P(X>x| X>t)$, where $X$ represents the duration of the OFF period. Note that $P_{\mathsf{off}}(t+b(t)| t\in \mathsf{off})=r_t(b(t))$ in \eqref{eq:dp}.

\begin{remark}\label{rem:nonexp} If the distribution of the OFF periods is not exponential, then \eqref{eq:dp} is not well-defined. To see this, consider that if for two consecutive sensing times $t$ and $t+b(t)$, the AP state is OFF, the distribution of the residual OFF time starting from $t+b(t)$ $(r_{t+b(t)})$ is not well-defined since we do not know when the current OFF period started. 
To handle the non-exponential distribution of the OFF periods, we will follow the approach of \cite{Yung2013} in Section \ref{sec:rest}, where it is assumed that as soon as any one complete ON interval is lost/missed because of no sensing epoch lying in that ON period, the system is reset.
\end{remark}

In light of Remark \ref{rem:nonexp}, in this section, we restrict our attention to exponential distribution for the OFF period, while the ON period is allowed to have any arbitrary distribution. Under this assumption, we have that the optimal control $b(t)$ does not depend on $t$. 
\begin{lemma} The optimal control $b(t)$ that solves \eqref{eq:dp} does not depend on $t$, when OFF periods are exponentially distributed.
\end{lemma}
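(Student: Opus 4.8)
The plan is to exploit the memoryless property of the exponential OFF distribution to show that every quantity in \eqref{eq:dp} that could a priori depend on the absolute sensing time $t$ in fact depends only on the chosen sleep duration $b$. Once the Bellman recursion is recognized as time-homogeneous, a constant value function solves it, and the maximizer inherits the same independence of $t$. The two ingredients to check are the transition probability $P_{\mathsf{off}}(t+b \mid t\in\mathsf{off})$ and the expected running cost $\bbE\{c(t,b)\}$.

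First I would treat the transition term. Writing $\lambda$ for the rate of the exponential OFF period, the residual survival probability satisfies $P_{\mathsf{off}}(t+b \mid t\in\mathsf{off}) = r_t(b) = e^{-\lambda b}$, which is independent of $t$. This is precisely the step that breaks down for non-exponential OFF periods (cf. Remark \ref{rem:nonexp}): there the residual law depends on when the current OFF period began, whereas here memorylessness erases that dependence.

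Next I would show that $\bbE\{c(t,b)\}$ is independent of $t$. Since $c(t,b) = c_s + M(t,b)$ with $c_s$ constant, it suffices to treat $\bbE\{M(t,b)\}$, the expected missed ON time accrued on $[t,t+b]$ conditioned on $\1_{\mathsf{off}}(t)$. Conditioned on being OFF at $t$, memorylessness makes the law of the residual OFF time — hence the random instant at which the current OFF period terminates within $[t,t+b]$ — a function of the offset $b$ alone; and because the ON durations $Y_i$ are IID and independent of the OFF periods, the distribution of the missed ON time following that termination is likewise independent of $t$. Integrating over these events gives $\bbE\{M(t,b)\} = m(b)$ for some function $m$ not depending on $t$, so $\bbE\{c(t,b)\} =: g(b)$ is $t$-independent.

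With $g(b)$ and $p(b) := e^{-\lambda b}$ both independent of $t$, \eqref{eq:dp} reads $V(t) = \max_{b\ge 0}[\,g(b) + p(b)\,V(t+b)\,]$, whose data are invariant under time shifts. I would conclude by positing a constant solution $V(t)\equiv V$, reducing the equation to the scalar fixed point $V = \max_{b\ge 0}[\,g(b) + e^{-\lambda b} V\,]$, whose maximizer $b^\star$ is a single value free of $t$. The main obstacle is justifying that this constant solution is the \emph{relevant} one, i.e.\ that the optimal value function is genuinely time-invariant rather than merely admitting a constant solution. I would settle this by uniqueness: the shift operator commutes with the Bellman map, so $V(\cdot+\tau)$ solves \eqref{eq:dp} whenever $V$ does, and since $e^{-\lambda b}$ is strictly less than one for every positive sleep duration the map is a contraction on bounded functions, forcing its fixed point to coincide with all its shifts and hence to be constant. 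Equivalently and more directly, memorylessness makes the decision problem faced at any OFF-sensing epoch statistically identical regardless of $t$, so the optimal cost-to-go, and therefore the optimal control $b(t)$, cannot depend on $t$.
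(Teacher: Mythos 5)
Your closing argument --- memorylessness makes the decision problem at every OFF-sensing epoch statistically identical, so the cost-to-go and hence the maximizing $b$ cannot depend on $t$ --- is precisely the paper's entire proof; in that sense your approach is the same one, just expanded into a time-homogeneity/fixed-point argument. However, two of your supporting steps are inaccurate. First, the identification $P_{\mathsf{off}}(t+b \mid t\in\mathsf{off}) = e^{-\lambda b}$ is wrong for the model of \eqref{eq:dp}: during the sleep interval the AP may complete one or more full ON/OFF cycles and be OFF \emph{again} at $t+b$, so this quantity is the renewal-type probability of Lemma \ref{lem:poff}, which for exponential ON and OFF equals $\frac{\lambda_c}{\lambda_c+\lambda_d}+\frac{\lambda_d}{\lambda_c+\lambda_d}e^{-(\lambda_c+\lambda_d)b}$ (Corollary \ref{prop:poff}), not the residual-OFF survival probability. (The paper's own passing remark that $P_{\mathsf{off}}(t+b(t)\mid t\in\mathsf{off})=r_t(b(t))$ is loose in the same way; what is actually used in \eqref{eq:VEXP} is the renewal formula.) Your argument survives because the only property you need is $t$-independence, which the correct expression also enjoys --- memoryless residual OFF time plus IID subsequent ON/OFF durations make the entire future trajectory law depend only on the offset --- but as written your step conflates \eqref{eq:dp} with the restricted one-OFF/one-ON recursion \eqref{eq:dprest} of Section \ref{sec:rest}, where the residual survival probability is the right transition term.

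Second, your uniqueness-via-contraction route has a gap: the per-action continuation probability $P_{\mathsf{off}}(b)$ equals $1$ at $b=0$ and tends to $1$ as $b\to 0^{+}$, so over the admissible action set $b\ge 0$ the Bellman operator admits no uniform contraction modulus on bounded functions; ``strictly less than one for every positive sleep duration'' does not suffice. Making that route rigorous would require restricting actions to $b\ge\epsilon>0$, or arguing separately that vanishing sleep durations are suboptimal because each sense pays the fixed cost $c_s$. The direct statistical-identity argument in your final sentence needs none of this machinery and is exactly what the paper invokes.
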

\begin{proof} At time $t$, the next sensing duration $b(t)$ is decided only if $t \in \mathsf{off}$. However, because of the memoryless property of the OFF periods, the event that $t \in \mathsf{off}$ gives no information about the future length of OFF and ON periods, the optimal sensing duration $b(t)$ does not depend on $t$.
\end{proof}

With arbitrary ON period distribution, we need the following notation. Let $P_{\mathsf{off}}(y | z \uparrow)$ be the probability that the AP is in OFF state at time $y, y\ge z$ given that a transition from OFF to ON period happens at time $z$. 
Because of memoryless property of the OFF period distribution,we do not need such notation for transition from ON to OFF period, since $P_{\mathsf{off}}(y | z \downarrow) = P_{\mathsf{off}}(y | z \in \mathsf{off})$.
To further the analysis, we next find an expression for $P_{\mathsf{off}}(t)$.
\begin{lemma}\label{lem:poff}
\begin{multline}\label{eq:poff1}
P_{\mathsf{off}}(t+x| t\in \mathsf{off}) =  \\ 
 P(r_t \geq x) + \int_0^x f_d(z) P_{\mathsf{off}}(t+x | z \uparrow)dz, z\ge t
 \end{multline} 
 \vspace{-.3in}
 \begin{multline}\label{eq:poff2}
P_{\mathsf{off}}(t+x | t \uparrow) = \int_0^x f_c(w) P_{\mathsf{off}}(t+x | w\in \mathsf{off}) dw, w\ge t.
\end{multline}
\end{lemma}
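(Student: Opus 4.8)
The plan is to establish each identity by a first-transition (renewal) decomposition: I condition on the epoch of the next state change after the conditioning event and apply the law of total probability, using the exponential (memoryless) nature of the OFF periods to pin down the residual-time density.

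For \eqref{eq:poff1}, I would start from the event $\{t \in \mathsf{off}\}$ and split on the fate of the ongoing OFF period. Either it persists past $t+x$, i.e. the residual time satisfies $r_t \ge x$, in which case the AP is certainly OFF at $t+x$; this contributes the survival term $P(r_t \ge x)$. Otherwise the OFF period terminates at residual offset $z \in [0,x)$, producing an OFF$\to$ON transition at that epoch. Because the OFF durations are exponential, the residual (forward recurrence) time inherits the density $f_d$, so the transition epoch carries density $f_d(z)$; conditioned on a transition there, the probability of being OFF at $t+x$ is exactly $P_{\mathsf{off}}(t+x \mid z\uparrow)$. Integrating over $z\in[0,x]$ and adding the survival term yields \eqref{eq:poff1}.

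For \eqref{eq:poff2}, I would condition on the length of the ON period that opens at the transition $t\uparrow$. Since ON durations have density $f_c$, this ON period closes at offset $w$ with density $f_c(w)$. If $w \ge x$ the AP is still ON at $t+x$ and contributes nothing to the OFF probability, whereas if $w < x$ an ON$\to$OFF transition occurs and a fresh OFF period begins at that epoch. Invoking the relation $P_{\mathsf{off}}(y \mid z\downarrow) = P_{\mathsf{off}}(y \mid z \in \mathsf{off})$ recorded just before the lemma—itself a consequence of memorylessness—the continuation probability from that epoch is $P_{\mathsf{off}}(t+x \mid w \in \mathsf{off})$. Integrating $f_c(w)\,P_{\mathsf{off}}(t+x \mid w\in \mathsf{off})$ over $w\in[0,x]$ gives \eqref{eq:poff2}.

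The two identities together form a coupled pair of renewal-type integral equations, and the derivation requires no regularity beyond existence of the densities $f_d,f_c$. The main obstacle is the bookkeeping at the heart of the argument: justifying that the residual OFF density equals $f_d$, which is precisely the memorylessness assumption in force throughout this section (and which fails, as Remark~\ref{rem:nonexp} explains, for non-exponential OFF periods), and keeping absolute versus residual times consistent so that the conditioning events $\{z\uparrow\}$ and $\{w\in\mathsf{off}\}$ refer to the correct epochs. Once the first transition is identified as an exhaustive and mutually exclusive case split, the law of total probability closes both computations.
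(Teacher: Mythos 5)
Your proposal is correct and follows essentially the same argument as the paper: a first-transition decomposition via the law of total probability, splitting on whether the current OFF (resp.\ ON) period survives past $t+x$ or terminates at an intermediate epoch, with the continuation probability supplied by $P_{\mathsf{off}}(\cdot \mid z\uparrow)$ (resp.\ $P_{\mathsf{off}}(\cdot \mid w \in \mathsf{off})$). You are in fact slightly more explicit than the paper on the one point it leaves implicit—that memorylessness is what licenses using $f_d$ as the density of the residual OFF time in \eqref{eq:poff1}—so nothing is missing.
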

\begin{proof} The first expression is obtained by counting the two exclusive events, i) the residual OFF time $r_t$ of the present OFF period that is going on at time $t$ exceeds $x$, and ii) the present OFF period expires at time $z$ (i.e. OFF to ON transition happens at $z$), 
and taking the expectation of $P_{\mathsf{off}}(t+x | z \uparrow)$ with respect to $t\le z \le x$. The second expression follows similarly. 
\end{proof}

\begin{corollary}\label{prop:poff} For OFF period  $\sim EXP(\lambda_d)$ and ON period  $\sim EXP(\lambda_c)$, for $x \geq 0$,
$$P_{\mathsf{off}}(t+x| t\in \mathsf{off}) =  \frac{\lambda_c}{\lambda_c + \lambda_d} + \frac{\lambda_d}{\lambda_c + \lambda_d} \exp(-(\lambda_d + \lambda_c)x).$$
\end{corollary}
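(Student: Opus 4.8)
The plan is to obtain the claimed closed form as a direct specialization of Lemma \ref{lem:poff}, exploiting the fact that for exponential OFF periods the quantity $P_{\mathsf{off}}(t+x\mid t\in\mathsf{off})$ is time-homogeneous and the integral identities \eqref{eq:poff1}--\eqref{eq:poff2} collapse into convolutions. By the memoryless property established earlier I may set $t=0$ without loss of generality and write $a(x):=P_{\mathsf{off}}(x\mid 0\in\mathsf{off})$ and $h(x):=P_{\mathsf{off}}(x\mid 0\uparrow)$, the latter being the probability of being OFF a time $x$ after a fresh ON period begins.

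First I would substitute the exponential densities $f_d(z)=\lambda_d e^{-\lambda_d z}$ and $f_c(w)=\lambda_c e^{-\lambda_c w}$, together with $P(r_t\ge x)=e^{-\lambda_d x}$, into \eqref{eq:poff1}--\eqref{eq:poff2}. The key structural observation is that, because the OFF period is memoryless, $P_{\mathsf{off}}(x\mid z\uparrow)$ depends only on the elapsed time $x-z$; hence both identities take the convolution form
$$a(x)=e^{-\lambda_d x}+\int_0^x \lambda_d e^{-\lambda_d z}\,h(x-z)\,dz,\qquad h(x)=\int_0^x \lambda_c e^{-\lambda_c w}\,a(x-w)\,dw.$$

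Next I would take Laplace transforms of both equations. Writing $\hat a,\hat h$ for the transforms and applying the convolution theorem gives the two algebraic relations $\hat a(s)=\tfrac{1}{s+\lambda_d}+\tfrac{\lambda_d}{s+\lambda_d}\hat h(s)$ and $\hat h(s)=\tfrac{\lambda_c}{s+\lambda_c}\hat a(s)$. Eliminating $\hat h$ and simplifying, the cross term collapses because $(s+\lambda_d)(s+\lambda_c)-\lambda_d\lambda_c=s(s+\lambda_d+\lambda_c)$, leaving $\hat a(s)=\tfrac{s+\lambda_c}{s(s+\lambda_d+\lambda_c)}$. A partial-fraction split and term-by-term inversion then yield the stated expression, and I would finish by checking the two sanity conditions $a(0)=1$ and $a(\infty)=\lambda_c/(\lambda_c+\lambda_d)$, the latter being the stationary OFF probability.

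I expect the only real obstacle to lie in the bookkeeping of the first step rather than in the algebra: one must justify carefully that $P_{\mathsf{off}}(x\mid z\uparrow)$ is a function of $x-z$ alone, i.e.\ that the process is time-homogeneous and regenerates at each OFF$\to$ON transition. This is precisely what the memoryless OFF assumption buys, and it is what turns \eqref{eq:poff1}--\eqref{eq:poff2} into convolutions amenable to the Laplace method; without it the transform step would not decouple. As an alternative route that avoids integral equations entirely, I could instead note that under both exponential assumptions the AP state is a two-state continuous-time Markov chain with rates $\lambda_d$ (OFF$\to$ON) and $\lambda_c$ (ON$\to$OFF), write the forward equation $a'(x)=\lambda_c-(\lambda_d+\lambda_c)\,a(x)$ with $a(0)=1$, and solve this first-order linear ODE directly to the same answer, providing an independent check.
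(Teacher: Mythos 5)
Your proposal is correct and follows essentially the same route as the paper: substitute the exponential densities into \eqref{eq:poff1}--\eqref{eq:poff2}, take Laplace transforms to get $\hat a(s)=\frac{1}{s+\lambda_d}+f_d^*(s)f_c^*(s)\hat a(s)$, solve and invert via partial fractions. Your added justification of the convolution structure and the two-state Markov-chain ODE cross-check are sound but not part of the paper's argument.
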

Note that $P_{\mathsf{off}}(t+x| t\in \mathsf{off})$ does not depend on the starting time $t$ as expected, because of the memoryless property of the exponential distribution.
\begin{proof}
We use the Laplace transforms of \eqref{eq:poff1} and \eqref{eq:poff2} to solve for 
$P_{\mathsf{off}}(t+x| t\in \mathsf{off})$. With OFF period  $\sim EXP(\lambda_d)$ and ON period  $\sim EXP(\lambda_c)$, $f_d(w) = \lambda_d \exp(-\lambda_d w)$ and $f_c(w) = \lambda_c \exp(-\lambda_c w)$. Denoting the Laplace transform of $f_d(w)$ as $f_d^*(s)$,  $f_c(w)$ as $f_c^*(s)$, $P_{\mathsf{off}}(t+x| t\in \mathsf{off})$ as $\hat P_0^{of}(s)$ and 
  $P_{\mathsf{off}}(t+x | t \uparrow)$ as $P_1^{of}(s)$, we have from \eqref{eq:poff1} and \eqref{eq:poff2}, \begin{align*}
\hat P_0^{of}(s) &= \frac 1{s+\lambda_d} + f_d^*(s) \hat P_1^{of}(s), \\
\hat P_1^{of}(s) &= f_c^*(s) \hat P_0^{of}(s).
\end{align*}
Note that $f_c^*(s) = \frac{\lambda_c}{s + \lambda_c}$ and 
$f_d^*(s) = \frac{\lambda_d}{s + \lambda_d}$. Hence
\begin{align}
\hat P_0^{of}(s) &= \frac 1{s+\lambda_d} \frac 1{1 - f_d^*(s) f_c^*(s)}, \\
	&= \frac 1{\lambda_d + \lambda_c} \left( \frac{\lambda_c}{s} + \frac{\lambda_d}{s+\lambda_d + \lambda_c} \right).
\end{align}
Taking the inverse transform, for $x\ge 0$
\begin{align*}
P_{\mathsf{off}}(t+x| t\in \mathsf{off}) = \frac{\lambda_c}{\lambda_c + \lambda_d} + \frac{\lambda_d}{\lambda_c + \lambda_d} \exp(-(\lambda_d + \lambda_c)x).
\end{align*}
\end{proof}

Next, we find the expected running cost $\bbE\{M(t,b(t))\}$ to compute the expected cost $\bbE\{c(t,b(t)\}$. Again appealing to the memoryless property of the exponential distribution,  $\bbE\{M(t,b(t))\} = \bbE\{M(b)\}$ where we have shifted the starting time to $0$.
We will use recursions similar to \eqref{eq:poff1} and \eqref{eq:poff2} and Laplace transforms to find $\bbE\{M(t)\}$.

Let $\sfM_{d}(t) = \bbE\{M(t)\}$ be the average missed ON period time between times $\tau$ to $t+\tau$, when $\tau \in \mathsf{off}$. Moreover, let 
$\sfM_{\uparrow}(t) = \bbE\{M(\tau, \tau+t)\}$ average missed ON period time between times $\tau$ to $t+\tau$ given that 
the OFF to ON transition happens at time $\tau$, and similarly let $\sfM_{\downarrow}(t) = \bbE\{M(\tau, \tau+t)\}$ given that 
the ON to OFF transition happens at time $\tau$, where the LHS has no dependence on $\tau$ because of the Markov property of ON and OFF periods. So without loss of generality, we take $\tau=0$. Note that because of memoryless property of OFF times $\sfM_{d}(t) = \sfM_{\downarrow}(t)$.

\begin{lemma}
 \begin{equation} \label{eq:tmc1}
 \sfM_{\uparrow}(t)=t\int_t^\infty f_c(x) dx+\int_0^tf_c(x)(x+\sfM_{\downarrow}(t-x))dx,
 \end{equation}
 and
 \begin{equation}\label{eq:tmc2}
 \sfM_{\downarrow}(t)=\int_0^tf_d(x)\sfM_{\uparrow}(t-x)dx.
 \end{equation}
 \end{lemma}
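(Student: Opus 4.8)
The plan is to derive both identities by a first-transition (renewal) argument: I condition on the length of the single ON or OFF period that is in progress at the start of the length-$t$ observation window, and exploit that the alternating process regenerates at every OFF-to-ON or ON-to-OFF transition. The key observation is that, within one inter-sensing window, the device is asleep throughout the interior and senses only at the endpoints; hence every instant at which the AP is ON inside the window is counted as missed ON time. Because the ON and OFF durations are IID, conditioning on the first transition epoch inside the window splits the expected missed time into a contribution accrued before that epoch plus a fresh copy of the same problem on the shorter remaining window, which is precisely what produces the convolution structure on the right-hand sides.

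For \eqref{eq:tmc1}, I take $\tau = 0$ and condition on the duration $X \sim f_c$ of the ON period that starts at the up-transition at time $0$. If $X \ge t$, the whole window $[0,t]$ lies inside this ON period, the device is asleep, and all $t$ units are missed; this event has probability $\int_t^\infty f_c(x)\,dx$ and contributes the term $t\int_t^\infty f_c(x)\,dx$. If instead $X = x < t$, the first $x$ units are ON and missed, an ON-to-OFF transition occurs at $x$, and by regeneration the residual window of length $t-x$ begins at a down-transition, so its expected missed time is $\sfM_{\downarrow}(t-x)$; integrating $x + \sfM_{\downarrow}(t-x)$ against $f_c(x)\,dx$ over $(0,t)$ yields the second term. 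Summing the two mutually exclusive cases gives \eqref{eq:tmc1}.

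Identity \eqref{eq:tmc2} follows by the symmetric argument, now conditioning on the duration $X \sim f_d$ of the OFF period that starts at the down-transition at $0$. The crucial asymmetry is that here the survival case $X \ge t$ contributes zero, since a window spent entirely in the OFF state contains no ON time to miss; only the expiry case $X = x < t$ survives, contributing $\sfM_{\uparrow}(t-x)$ (no missed time is accrued on the OFF stretch $[0,x]$) and giving $\int_0^t f_d(x)\,\sfM_{\uparrow}(t-x)\,dx$. Throughout, the independence of the left-hand sides from $\tau$ is exactly the renewal property asserted in the preamble to the lemma: each transition epoch is a regeneration point for the IID alternating process, so we may take $\tau=0$ without loss of generality.

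The argument is essentially bookkeeping, so the only real care needed is in the boundary/survival term and in avoiding double counting: one must check that within a single window nothing is ever utilized (the device connects only at a sensing epoch, i.e. at the window's right endpoint), so that all in-window ON time is legitimately charged as missed, and that the fresh start invoked at each transition is justified by the ON and OFF periods being IID rather than by memorylessness---so that these recursions hold for arbitrary $f_c$ and $f_d$, not only in the exponential case.
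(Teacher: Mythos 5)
Your proof is correct and follows exactly the paper's argument: the paper likewise derives \eqref{eq:tmc1} by splitting $\sfM_{\uparrow}(t)$ into the case where the ON period starting at $\tau=0$ survives past $t$ (giving $t\int_t^\infty f_c(x)\,dx$) and the case where it expires at $x<t$ (giving the convolution term with $x+\sfM_{\downarrow}(t-x)$), with \eqref{eq:tmc2} following by the symmetric conditioning. Your write-up is simply a more detailed rendering of the same first-transition decomposition, including the correct observation that the all-OFF survival case contributes nothing in \eqref{eq:tmc2}.
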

 \begin{proof}
To derive \eqref{eq:tmc1}, we have broken the expectation $\sfM_{\uparrow}(t)$ into two terms, where in the first we count the expected length of the ON period that starts at time $\tau = 0$ and continues beyond time $t$, and in the second, we consider the case when the ON period that starts at time $\tau = 0$ finishes at some time $x < t$ and count for the expected loss with ON to OFF transition happening at $x$. The second expression \eqref{eq:tmc2} follows similarly.
\end{proof}

\begin{theorem}\label{thm:lt}
For OFF period  $\sim EXP(\lambda_d)$ and ON period  $\sim EXP(\lambda_c)$, the expected loss $\sfM_{d}(t)$ is given by
\begin{equation}
\sfM_{d}(t) = \frac{\lambda_d}{\lambda_d+\lambda_c}\left( t - \frac{1-e^{-(\lambda_d+\lambda_c)t}}{\lambda_c + \lambda_d}\right).
\end{equation}
\end{theorem}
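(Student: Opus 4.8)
The plan is to mirror the Laplace-transform technique already used in Corollary \ref{prop:poff}, now applied to the coupled recursions \eqref{eq:tmc1}--\eqref{eq:tmc2}. Since $\sfM_{d}(t)=\sfM_{\downarrow}(t)$ by the memoryless property of the OFF times noted just before the lemma, it suffices to solve for $\sfM_{\downarrow}$. I denote the Laplace transforms of $\sfM_{\uparrow}(t)$ and $\sfM_{\downarrow}(t)$ by $\sfM_{\uparrow}^*(s)$ and $\sfM_{\downarrow}^*(s)$, and recall from the previous proof that $f_c^*(s)=\lambda_c/(s+\lambda_c)$ and $f_d^*(s)=\lambda_d/(s+\lambda_d)$.

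First I would transform \eqref{eq:tmc2}: its right-hand side is a convolution of $f_d$ with $\sfM_{\uparrow}$, so at once $\sfM_{\downarrow}^*(s)=f_d^*(s)\,\sfM_{\uparrow}^*(s)$. Transforming \eqref{eq:tmc1} is the only step needing care, since its first two terms are not convolutions. The cleanest route is to observe that these two terms together equal $\bbE\{\min(Y,t)\}$, the expected overlap of a freshly started ON period with $[0,t]$; for an exponential ON period this equals $(1-e^{-\lambda_c t})/\lambda_c$, whose transform is $1/\left(s(s+\lambda_c)\right)$. The remaining term is the convolution of $f_c$ with $\sfM_{\downarrow}$, giving $\sfM_{\uparrow}^*(s)=\tfrac{1}{s(s+\lambda_c)}+f_c^*(s)\,\sfM_{\downarrow}^*(s)$. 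Substituting the first relation into this yields a single linear equation for $\sfM_{\uparrow}^*(s)$, and the key algebraic simplification is the cancellation $(s+\lambda_c)(s+\lambda_d)-\lambda_c\lambda_d=s(s+\lambda_c+\lambda_d)$, which makes the resulting rational function have a clean pole structure. Back-substituting then produces $\sfM_{\downarrow}^*(s)=\lambda_d/\left(s^2(s+\lambda_c+\lambda_d)\right)$, after which a partial-fraction decomposition into $1/s$, $1/s^2$, and $1/(s+\lambda_c+\lambda_d)$ terms and term-by-term inversion deliver the claimed closed form.

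The main obstacle is precisely the transform of the inhomogeneous part of \eqref{eq:tmc1}: unlike the cognate derivation in Corollary \ref{prop:poff}, the driving term here is an expected time rather than a probability, so one must either recognize the $\bbE\{\min(Y,t)\}$ structure or compute the transforms of $t\,P(Y>t)$ and $\int_0^t x f_c(x)\,dx$ separately and verify that they combine to $1/\left(s(s+\lambda_c)\right)$. Everything downstream—solving the $2\times 2$ transform system and inverting—is routine, with the only point to watch being the bookkeeping of the repeated pole at $s=0$ in the partial-fraction step.
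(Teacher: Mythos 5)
Your proposal is correct and follows essentially the same route as the paper: Laplace-transform both recursions, solve the resulting linear system using $\sfM_{\downarrow}^*(s)=f_d^*(s)\sfM_{\uparrow}^*(s)$, and invert via partial fractions. The only cosmetic difference is that you substitute the exponential forms early and identify the inhomogeneous term of \eqref{eq:tmc1} as $\bbE\{\min(Y,t)\}$ with transform $\frac{1}{s(s+\lambda_c)}$, whereas the paper keeps the general expression $\frac{f_c^*(0)-f_c^*(s)}{s^2}$ and specializes only at the end—these coincide since $f_c^*(0)=1$.
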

\begin{proof}
 We take the Laplace transforms of \eqref{eq:tmc1} and \eqref{eq:tmc2} to get 
\begin{equation*}
\sfM_{\uparrow}^*(s)=\frac{(f_c^*(0)-f_c^*(s))
}{s^2}+f_c^*(s)\sfM_{\downarrow}^*(s),
\end{equation*}
and
$\sfM_{\downarrow}^*(s)=f_d^*(s)\sfM_{\uparrow}^*(s)$.
So we have 
\begin{equation*}
\sfM_{d}^*(s) = \sfM_{\downarrow}^*(s)=f_d^*(s)\frac{(f_c^*(0)-f_c^*(s))}{s^2(1-f_c^*(s)f_d^*(s))}.
\end{equation*}
Substituting for $f_c^*(s) = \frac{\lambda_c}{s+ \lambda_c}$ and 
$f_d^*(s) = \frac{\lambda_d}{s + \lambda_d}$, and taking the inverse Laplace transform we obtain the result.
\end{proof}

\begin{remark} It is important to note that similar derivation for $\sfM_{d}(t)$ has been attempted in \cite{kim2011improving}, however, there are glaring errors in it. For example, the ON period loss $\sfM_{d}(t)$ incurred in time $t$ does not depend on $t$, and is always less than $1$. 
\end{remark}

Finally, we have all the intermediate results to solve for the DP \eqref{eq:dp} when the OFF periods are exponentially distributed, where $b(t) =b$, and the DP simplifies to 
\begin{equation}\label{eq:dpsimple}
V(b) = \max_{b\ge 0} [\bbE\{c(b)\} + P_{\mathsf{off}}(b) V(b)].
\end{equation}
\begin{theorem}\label{thm:b}
The optimal sensing duration $b$ satisfies the following equation 
\begin{equation}
\frac{\mathrm{d}}{\mathrm{d}b} V = \frac{\mathrm{d}}{\mathrm{d}b} \left(\frac{c_s + \sfM_{d}(b)}{1-P_{\mathsf{off}}(b)}\right)=0,
\end{equation}
where $\sfM_{\downarrow}(b)= \sfM_d(b)$ can be found by substituting for $f_c(x)$ and $f_d(x)$ in  \eqref{eq:tmc1} and \eqref{eq:tmc2} and $P_{\mathsf{off}}(b)$ can be found by substituting for $f_c(x)$ and $f_d(x)$ in \eqref{eq:poff1} and \eqref{eq:poff2}.
\end{theorem}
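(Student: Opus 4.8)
The plan is to use the stationarity already in force for exponential OFF periods to collapse the functional equation \eqref{eq:dp} into a scalar fixed point, to solve that fixed point in closed form, and then to recover the optimal $b$ as a stationary point of the resulting one-dimensional objective. The ingredients are all in place: the Lemma of this section asserting that the optimal control does not depend on $t$ lets us restrict to a constant policy $b(t)\equiv b$; Corollary \ref{prop:poff} makes the survival factor $P_{\mathsf{off}}(t+b\mid t\in\mathsf{off})=P_{\mathsf{off}}(b)$ independent of $t$; and the memoryless reduction preceding Theorem \ref{thm:lt}, together with Theorem \ref{thm:lt} itself, makes the expected running cost $\bbE\{c(b)\}=c_s+\sfM_d(b)$ independent of $t$ as well. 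These three facts are precisely what reduce \eqref{eq:dp} to the stationary form \eqref{eq:dpsimple}.

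First I would read \eqref{eq:dpsimple} for a fixed constant policy $b$ as a scalar equation in the single unknown $V_b$, namely $V_b=(c_s+\sfM_d(b))+P_{\mathsf{off}}(b)\,V_b$. This is exactly the renewal structure of the restart model: with probability $P_{\mathsf{off}}(b)$ the AP is still OFF at the next epoch and the session begins afresh with the same value. Solving gives $V_b=\frac{c_s+\sfM_d(b)}{1-P_{\mathsf{off}}(b)}$, which is legitimate since Corollary \ref{prop:poff} yields $P_{\mathsf{off}}(b)<1$ for every $b>0$, so the denominator is bounded away from zero on the interior.

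Next I would justify that optimizing the functional equation is the same as optimizing the scalar $V_b$. Writing the Bellman right-hand side as $g(b)+P_{\mathsf{off}}(b)\,V$ with $g(b)=c_s+\sfM_d(b)$, the stationary value must satisfy $V\le g(b)/(1-P_{\mathsf{off}}(b))$ for every $b$, with equality at the optimizer; hence $V=\min_{b\ge 0}V_b$ and the optimal sensing duration is the minimizer of $V_b$. Because $\sfM_d(b)$ (Theorem \ref{thm:lt}) and $P_{\mathsf{off}}(b)$ (Corollary \ref{prop:poff}) are smooth closed-form expressions, $V_b$ is differentiable, and at an interior minimizer Fermat's condition gives precisely $\frac{\mathrm{d}}{\mathrm{d}b}V=\frac{\mathrm{d}}{\mathrm{d}b}\left(\frac{c_s+\sfM_d(b)}{1-P_{\mathsf{off}}(b)}\right)=0$, the claimed equation.

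The step I expect to be the main obstacle is establishing that the minimizer is interior, so that stationarity rather than a boundary condition characterizes it, and cleanly arguing the reduction from the self-referential equation to the pointwise minimization of $V_b$. For interiority I would read off the two boundaries from the closed forms: as $b\to 0^+$ one has $1-P_{\mathsf{off}}(b)\sim\lambda_d b$ while $c_s+\sfM_d(b)\to c_s>0$, so $V_b\to\infty$; and as $b\to\infty$ one has $\sfM_d(b)\sim\frac{\lambda_d}{\lambda_d+\lambda_c}\,b$ while $1-P_{\mathsf{off}}(b)\to\frac{\lambda_d}{\lambda_c+\lambda_d}$, so $V_b\sim b\to\infty$ again. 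Divergence at both ends forces a finite minimizer into the interior, which validates setting $\frac{\mathrm{d}}{\mathrm{d}b}V_b=0$ and completes the argument.
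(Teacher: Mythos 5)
Your proposal is correct and follows essentially the same route as the paper: rewrite \eqref{eq:dpsimple} by collecting the $V$ terms to obtain $V=\frac{c_s+\sfM_d(b)}{1-P_{\mathsf{off}}(b)}$, then set $\frac{\mathrm{d}}{\mathrm{d}b}V=0$. Your additional checks (that minimizing the Bellman right-hand side is equivalent to pointwise minimization of $V_b$, and that the minimizer is interior because $V_b\to\infty$ at both $b\to 0^+$ and $b\to\infty$) go beyond the paper's one-line proof and are welcome rigor, with the minor caveat that those boundary and smoothness arguments invoke the exponential-ON closed forms of Corollary \ref{prop:poff} and Theorem \ref{thm:lt}, whereas the theorem is stated for arbitrary ON distributions.
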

\begin{proof} Follows by rewriting \eqref{eq:dpsimple}, and taking the $V$ terms common, and equating the derivative of $V$ with respect to $b$ to zero. 
\end{proof}

\begin{corollary}\label{cor:optbexpexp}
For OFF period  $\sim EXP(\lambda_d)$ and ON period  $\sim EXP(\lambda_c)$, the optimal sensing duration $b$ satisfies
\begin{align} \label{eq:lambert:1}
e^{-(\lambda_c+\lambda_d)b} \left( 1 + \frac {c_s}{\lambda_d} (\lambda_c+\lambda_d)^2 + b (\lambda_c + \lambda_d)\right) = 1.
\end{align} 
\end{corollary}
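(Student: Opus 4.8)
The plan is to specialize the stationarity condition of Theorem \ref{thm:b} to the doubly-exponential case by substituting the closed forms already obtained in Corollary \ref{prop:poff} and Theorem \ref{thm:lt}, and then to exploit a structural cancellation that reduces the resulting transcendental equation to the stated one. Throughout I abbreviate $\mu := \lambda_c + \lambda_d$.

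First I would record the two ingredients. From Corollary \ref{prop:poff},
$$1 - P_{\mathsf{off}}(b) = \frac{\lambda_d}{\mu}\left(1 - e^{-\mu b}\right) =: D(b),$$
and from Theorem \ref{thm:lt},
$$c_s + \sfM_d(b) = c_s + \frac{\lambda_d}{\mu}b - \frac{\lambda_d}{\mu^2}\left(1 - e^{-\mu b}\right) =: N(b),$$
so that $V = N(b)/D(b)$ and the condition $\tfrac{\mathrm{d}}{\mathrm{d}b}V = 0$ of Theorem \ref{thm:b} reads $N'(b)D(b) = N(b)D'(b)$.

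The crux of the argument is the observation that differentiating $N$ reproduces $D$ exactly:
$$N'(b) = \frac{\lambda_d}{\mu} - \frac{\lambda_d}{\mu}e^{-\mu b} = D(b),$$
while $D'(b) = \lambda_d e^{-\mu b}$. Hence the quotient condition collapses to $D(b)^2 = N(b)D'(b)$, i.e.
$$\frac{\lambda_d^2}{\mu^2}\left(1 - e^{-\mu b}\right)^2 = \lambda_d e^{-\mu b}\left(c_s + \frac{\lambda_d}{\mu}b - \frac{\lambda_d}{\mu^2}\left(1 - e^{-\mu b}\right)\right).$$

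The last step is to clear the common factor $\lambda_d$ and expand both sides in the single unknown $u := e^{-\mu b}$. The key point, which I expect to be the only place requiring care, is that the $u^2$ terms on the two sides are identical and cancel, leaving a relation that is affine in $u$; solving it and dividing through by $\lambda_d/\mu^2$ yields
$$1 = e^{-\mu b}\left(1 + \frac{c_s\mu^2}{\lambda_d} + \mu b\right),$$
which is precisely \eqref{eq:lambert:1} after substituting $\mu = \lambda_c + \lambda_d$. There is no genuine analytic obstacle here, since the entire derivation is a finite algebraic manipulation; the main thing to get right is the bookkeeping of the $e^{-\mu b}$ and $e^{-2\mu b}$ contributions and verifying that the quadratic terms vanish rather than leaving a spurious $u^2$ term.
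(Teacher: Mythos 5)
Your proposal is correct and follows essentially the same route as the paper: substitute the closed forms from Corollary \ref{prop:poff} and Theorem \ref{thm:lt} into $V(b)=N(b)/D(b)$ as in \eqref{eq:VEXP}, set the derivative to zero, and simplify to \eqref{eq:lambert:1}; your observation that $N'(b)=D(b)$ and the resulting cancellation of the $e^{-2\mu b}$ terms is a clean way to carry out the algebra the paper leaves implicit. The only thing the paper adds that you omit is the remark that the second derivative of $V$ is nonnegative, confirming the stationary point is the global minimum.
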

\begin{proof} From Corollary \ref{prop:poff} and Theorem \ref{thm:lt}, substituting for $P_{\mathsf{off}}(b)$ and $\sfM_{\uparrow}(b)$ in \eqref{eq:dpsimple},  we get
\begin{align}\label{eq:VEXP}
V(b)  &= 
	\frac{ c_s + \frac{\lambda_d}{\lambda_d+\lambda_c}\left( b - \frac{1-e^{-(\lambda_d+\lambda_c)b}}{\lambda_c + \lambda_d}\right) }
	{\frac{\lambda_d}{\lambda_c + \lambda_d} \left( 1 - e^{-(\lambda_d + \lambda_c)b}  \right)}.
\end{align}
Equating $\frac{\mathrm{d}}{\mathrm{db}} V = 0$, we get~\eqref{eq:lambert:1}. 
While this is a trascendental equation, the numerical solution is easy, see Figure~\ref{fig:sim:1}.
It is also easy to check that the second derivative of $V$ is $\ge 0$ and hence the above solution is indeed the global minimum. 
\end{proof}

Similarly, we can find the optimal sensing duration $b$ for other ON period distributions as long as the OFF period distribution is exponential.

\section{Non EXP-OFF Period}\label{sec:rest}
Recall from Remark \ref{rem:nonexp} that in the framework of Section \ref{sec:sysmod}, we cannot solve for the optimal sensing durations when the OFF period distribution is not exponential. To circumvent this restriction, in this section we make an extra assumption following \cite{Yung2013}, where if any complete ON period is missed because of no sensing in that ON period, the mobile device is made aware of that and the system is reset. Thus, the problem \eqref{eq:dp} is now restricted to one OFF ($X$) and one ON ($Y$) period, and we want to choose the sensing durations so as to minimize the sum of the expected missed ON period time and the expected sensing cost. 

Thus, in this case, the DP \eqref{eq:dp} is,  
\begin{equation}\label{eq:dprest}
V(t) = \max_{b(t)\ge 0} [\bbE\{c(t,b(t))\} + P(r_t>b(t)) V(t+b(t))],
\end{equation}
where the cost function $c(t,b(t))$ is simplified and given by $c(t,b(t)) = c_s + \bbE\{(b(t) - r_t)\1_{b(t) \ge r_t, Y> b(t)-r_t}\} + \bbE\{ Y \1_{(Y \le b(t)- r_t)}\}$, where $c_s$ is the fixed sensing cost, while the lost ON time is written as two terms,  either the complete one ON period $Y$ is missed if $b(t)$ is larger than $r_t + Y$, otherwise, the missed ON time is $(b(t) - r_t)$. 

Problem \eqref{eq:dprest} is a generalization of problem considered in \cite{azad2011optimal}, where the length of the ON period is infinite (not a random variable) and the problem is to minimize the sum of the expected time lost in detecting ON period and the expected sensing cost. Note that in our setup, since the ON period expires in finite expected time, solution of \cite{azad2011optimal} does not apply.

We use a state space approach to derive results when the ON and OFF periods  have a general distribution. The state space we consider is the set of non-negative real numbers. An action $b(t)$ is the duration of the next sleep period. We assume that $b(t)$ can take values only in a finite set (which is clearly true in practice). Thus, the set
of $t$ reachable (with positive probability) by any policy is
countable and hence without loss of generality we assume that
the state space is discrete. Then we have the following result.

\begin{theorem}\label{thm:struct}
To solve \eqref{eq:dprest}, for any ON and OFF period distribution, the following statements hold.
\begin{enumerate}
\item There exists an optimal deterministic stationary policy.
\newline
\item Let $V^0 = 0$, $V^{k+1}=\cL V^{k}$, where
     $\cL V(t)=\min_{b(t)} [c(t,b(t))+P(r_t>b(t))V(t+b(t)) ]$
     and $c(t,b(t))$ is the per stage/running cost. Then $V^k$ converges monotonically to the optimal value $V^*$.
\newline
\item $V^*$  is the smallest nonnegative solution of $V^*  = \cL V^*$.
A stationary policy that chooses at state (time) $t$ an action that
achieves the minimum of $\cL V^*$  is optimal.
\end{enumerate}
\end{theorem}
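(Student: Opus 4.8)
The plan is to treat \eqref{eq:dprest} as a total expected cost (stochastic shortest path) problem with nonnegative per stage costs and to run the standard negative dynamic programming machinery, exploiting the two special features the paper has already granted us: the per stage cost $c(t,b(t))\ge 0$ is nonnegative, and the action set is finite, so the minimum in $\cL$ is always attained and the reachable state space is countable. Throughout I write $p(t,b):=P(r_t>b)\in[0,1]$ for the survival (no reset) probability, $\cL_\mu V(t):=c(t,\mu(t))+p(t,\mu(t))\,V(t+\mu(t))$ for the single policy operator of a deterministic stationary $\mu$, and $V^*(t):=\inf_\pi V_\pi(t)$ for the optimal cost, where $V_\pi$ is the expected total cost of policy $\pi$ until reset. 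The first thing to record is that $\cL$ and every $\cL_\mu$ are monotone ($V\le W\Rightarrow \cL V\le\cL W$), since $c\ge0$ and $p\ge0$, and that $V^*\ge0$ is finite (any policy keeping $b$ bounded away from $0$ incurs finitely many senses in expectation over one finite OFF--ON session).

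First I would analyze the value iteration of part (2). Because $c\ge0$ we have $V^1=\cL V^0=\min_b c(t,b)\ge0=V^0$, and monotonicity of $\cL$ then gives $V^{k+1}=\cL V^k\ge\cL V^{k-1}=V^k$ by induction, so $\{V^k\}$ is nondecreasing and converges pointwise to some $\bar V\le+\infty$. Identifying $V^k$ with the optimal cost of the $k$ stage truncation of the problem (terminal cost $0$) and using $c\ge0$ gives $V^k\le V^*$ for every $k$, hence $\bar V\le V^*$. The converse inequality is where the real work lies: I would pass to the limit inside $\cL$, i.e.\ show $\cL\bar V=\bar V$, which is the step that genuinely uses finiteness of the action set --- for a nondecreasing sequence $g_k(b)\uparrow g_\infty(b)$ over finitely many $b$ one has $\lim_k\min_b g_k(b)=\min_b g_\infty(b)$ (the $\le$ direction is immediate, the $\ge$ direction follows by a pigeonhole on the finitely many minimizers). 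Thus $\bar V=\lim_k\cL V^k=\cL\bar V$, so $\bar V$ is a nonnegative solution of Bellman's equation.

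Next I would prove the key domination lemma underlying both part (3) and the missing inequality above: \emph{every} nonnegative $V$ with $V\ge\cL V$ satisfies $V\ge V^*$. Given such a $V$ and any policy $\pi=(\mu_0,\mu_1,\dots)$, from $V\ge\cL V\ge\cL_{\mu_0}V$ and monotonicity one gets $V\ge\cL_{\mu_0}\cL_{\mu_1}\cdots\cL_{\mu_{k-1}}V$ for every $k$; expanding the right side as the expected cost over the first $k$ stages plus a nonnegative surviving tail term $\bbE_\pi[(\text{survival})\,V(t_k)]\ge0$ and then discarding that tail yields $V\ge\bbE_\pi[\sum_{j<k}c_j]$. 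Letting $k\to\infty$ and invoking monotone convergence gives $V\ge V_\pi$, and taking the infimum over $\pi$ gives $V\ge V^*$. Applying this to $\bar V$ (a nonnegative Bellman solution) gives $\bar V\ge V^*$, so $\bar V=V^*$; this simultaneously establishes monotone convergence of value iteration to $V^*$ (part 2), the optimality equation $V^*=\cL V^*$, and the fact that $V^*$ is the smallest nonnegative solution of $V=\cL V$ (first half of part 3).

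Finally, for part (1) and the policy half of (3), I would take the greedy stationary policy $\mu^*(t)\in\arg\min_b[c(t,b)+p(t,b)V^*(t+b)]$, which exists because the action set is finite, so that $\cL_{\mu^*}V^*=\cL V^*=V^*$. Iterating gives $V^*=\cL_{\mu^*}^kV^*\ge\bbE_{\mu^*}[\sum_{j<k}c_j]$ (again dropping the nonnegative tail $\bbE_{\mu^*}[(\text{survival})V^*(t_k)]$), and letting $k\to\infty$ by monotone convergence yields $V^*\ge V_{\mu^*}$; since trivially $V_{\mu^*}\ge V^*$, the policy $\mu^*$ is optimal, and it is deterministic and stationary. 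The main obstacle throughout is the interchange of the limit with the minimization and with the expectation: it is exactly here that, in a general negative DP model, one could converge to a spurious fixed point larger than $V^*$ or certify a non optimal greedy policy. What rescues us is structural --- nonnegativity of the costs means the surviving tail terms are always $\ge0$ and can simply be discarded to obtain the lower bounds we need (so no ``properness'' of policies has to be verified), while the finite action set licenses $\min_b\lim_k=\lim_k\min_b$. I would keep an eye on the mild well posedness points (that $V^*<\infty$ and that reset occurs with positive probability for admissible $b$) but expect them to be routine given the finite OFF--ON session.
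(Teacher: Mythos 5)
Your proposal is correct and follows essentially the same route as the paper: both reduce \eqref{eq:dprest} to a countable-state, finite-action, nonnegative-cost total-cost MDP and then invoke the standard positive-cost dynamic programming theory. The only difference is that the paper verifies the hypotheses and cites Puterman's Theorems 7.3.6, 7.3.10 and 7.3.3 for the three claims, whereas you reprove that machinery (monotone value iteration with the finite-action limit/min interchange, the domination lemma for nonnegative supersolutions, and greedy-policy optimality) from first principles --- sound, but an inline reconstruction of exactly the results the paper treats as known.
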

\begin{proof}
1) follows from the [\cite{BookPuterman}, Thm 7.3.6] that states that if the state space is discrete (finite or countable) and the action set for any state is finite, then there exists an optimal deterministic stationary policy. These conditions are satisfied in this case since 
the set of all actions ($b$'s) is assumed to be finite, and the state space is countable. 
Similarly, 2) follows from the [\cite{BookPuterman},
Thm 7.3.10] that states that if reward $w$ for action $a$ at state $s$, $w(s,a)\ge 0$, and state space is countable and action space is finite for each state, then if $V^0=0$, $V^{n+1}=\cL V^n$ converges monotonically to $V^*$ and 3) follows from [\cite{BookPuterman}, Thm 7.3.3].\end{proof}
Theorem \ref{thm:struct} shows that it is sufficient to consider deterministic policies without losing out on optimality, and randomized strategies are not needed. Moreover, part $2)$ and $3)$ tell us that the value iteration policy converges to the optimal solution for any ON and OFF period distributions.

We now consider the special case when the OFF period depends on the time at which it starts, but in the limit of very large time $t$, it loses that dependence.

\begin{theorem}\label{thm:convrestime}
Assume that the residual OFF time $r_t$ converges in distribution to $\sfr$, and  
define $v(b)=\frac{ c^{\star}(b)}{1-P(\sfr > b)}$. Then 
\begin{enumerate}
\item $\lim_{t\to \infty}V^*(t)=\min_{b} v(b)$.
\newline
\item Assume that there is a unique $b$ that achieves the minimum of $v(b)$ and denote it by $b^{\star}$. Then there is some stationary optimal policy $b(t)$ such that
for all $t$ large enough, $b(t) =b^{\star}$.
\end{enumerate}
\end{theorem}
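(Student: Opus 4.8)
The plan is to read $v(b)$ as the expected total cost of the stationary policy that always sleeps for duration $b$, evaluated in the limiting regime where the residual OFF time has already settled to $\sfr$: each attempt costs $c^{\star}(b):=\lim_{t\to\infty}c(t,b)$ (the running cost of \eqref{eq:dprest} with $r_t$ replaced by its weak limit $\sfr$), ends the session with probability $1-P(\sfr>b)$, and otherwise repeats, so the geometric sum collapses to $c^{\star}(b)/(1-P(\sfr>b))=v(b)$. Writing $v^{\star}:=\min_b v(b)$, I would prove part~1 by sandwiching, namely $\liminf_{t\to\infty}V^{*}(t)\ge v^{\star}$ and $\limsup_{t\to\infty}V^{*}(t)\le v^{\star}$, so that the limit exists and equals $v^{\star}$. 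Throughout I use that the action set $\cB$ is finite (Theorem~\ref{thm:struct}) and that weak convergence $r_t\Rightarrow\sfr$ gives $c(t,b)\to c^{\star}(b)$ and $P(r_t>b)\to P(\sfr>b)$ at each of the finitely many $b\in\cB$ (taken to be continuity points of the limit law).

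For the lower bound, set $\underline V=\liminf_t V^{*}(t)$ and pick $t_k\to\infty$ with $V^{*}(t_k)\to\underline V$. Since the optimal action $b^{*}_{t_k}$ lies in the finite set $\cB$, I would pass to a subsequence on which it is a constant $\bar b$, so that $V^{*}(t_k)=c(t_k,\bar b)+P(r_{t_k}>\bar b)\,V^{*}(t_k+\bar b)$. The case $P(\sfr>\bar b)=1$ is excluded first: it would force $c^{\star}(\bar b)\le 0$, contradicting $c^{\star}(\bar b)\ge c_s>0$. Otherwise, letting $k\to\infty$ and using $V^{*}(t_k+\bar b)\ge\underline V$ in the limit (as $t_k+\bar b\to\infty$) yields $\underline V\ge c^{\star}(\bar b)+P(\sfr>\bar b)\,\underline V$, hence $\underline V\ge c^{\star}(\bar b)/(1-P(\sfr>\bar b))=v(\bar b)\ge v^{\star}$. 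For the upper bound, fixing $b=b^{\star}$ in the suboptimality inequality $V^{*}(t)\le c(t,b^{\star})+P(r_t>b^{\star})V^{*}(t+b^{\star})$ and taking $\limsup$ along a maximizing subsequence gives $\overline V\le c^{\star}(b^{\star})+P(\sfr>b^{\star})\overline V$, whence $\overline V\le v(b^{\star})=v^{\star}$. Boundedness of $V^{*}(t)$ for large $t$, needed to make these steps rigorous and to rule out $\overline V=\infty$, I would obtain at the outset by comparing to the fixed-$b_0$ policy for any $b_0$ with $P(\sfr>b_0)<1$: for large $t$ its per-step cost is at most $c_s+b_0+\bbE\{Y\}$ and its continuation probability is at most some $\rho<1$, so $V^{*}(t)\le(c_s+b_0+\bbE\{Y\})/(1-\rho)$.

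For part~2, I would study the per-state objective $g_t(b)=c(t,b)+P(r_t>b)V^{*}(t+b)$, whose minimizer over $\cB$ is the optimal action at time $t$. By part~1, $V^{*}(t+b)\to v^{\star}$, so $g_t(b)\to g_{\infty}(b):=c^{\star}(b)+P(\sfr>b)v^{\star}$ for each $b\in\cB$. Using $c^{\star}(b)=v(b)(1-P(\sfr>b))$ gives $g_{\infty}(b)=v(b)(1-P(\sfr>b))+P(\sfr>b)v^{\star}$, so $g_{\infty}(b^{\star})=v^{\star}$, while for $b\neq b^{\star}$ the assumed uniqueness ($v(b)>v^{\star}$) forces $g_{\infty}(b)>v^{\star}$. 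As $\cB$ is finite there is a strict gap $\delta=\min_{b\neq b^{\star}}g_{\infty}(b)-v^{\star}>0$, and the pointwise convergence $g_t(b)\to g_{\infty}(b)$ then makes $b^{\star}$ the unique minimizer of $g_t(\cdot)$ for all $t$ large enough; selecting it yields a stationary optimal policy with $b(t)=b^{\star}$ eventually.

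The main obstacle is the lower bound in part~1: converting the time-inhomogeneous Bellman recursion into the stationary renewal value $v(\bar b)$ forces one to freeze the optimal action along a subsequence and then push the $\liminf$ through the recursion while the continuation value $V^{*}(t_k+\bar b)$ is controlled only from below. Making the geometric collapse legitimate also hinges on the preliminary boundedness estimate and on excluding the degenerate branch $P(\sfr>b)=1$, so I would secure boundedness of $V^{*}$ before carrying out the sandwich argument.
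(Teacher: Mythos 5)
Your proof is correct, but it takes a genuinely different route from the paper's. The paper argues through value iteration: starting from $V^0=0$, it shows inductively that each iterate has a limit $\bar V^k=\lim_{t\to\infty}V^k(t)$ satisfying $\bar V^{k+1}=\min_{b}[c^{\star}(b)+P(\sfr>b)\bar V^k]$, and then invokes the convergence $V^k\to V^*$ from Theorem~\ref{thm:struct} to conclude that $\bar V=\lim_{t\to\infty}V^*(t)$ exists and solves the stationary fixed-point equation; part 2 is dispatched in a single sentence ($b^{\star}$ ``performs better than any other policy''). That argument hinges on interchanging the two limits $k\to\infty$ and $t\to\infty$, which the paper does not justify: pointwise monotone convergence of $V^k$ to $V^*$ together with existence of $\lim_t V^k(t)$ for each fixed $k$ does not by itself yield existence of $\lim_t V^*(t)$, nor equality of the iterated limits. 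You avoid this entirely by working directly with the optimality equation $V^*=\cL V^*$: a liminf/limsup sandwich, with extraction of a constant optimal action along a subsequence (possible because the action set is finite) for the lower bound, the suboptimality inequality at $b^{\star}$ for the upper bound, and an explicit a priori boundedness estimate via the fixed-$b_0$ policy that the paper never supplies but implicitly needs to rule out $\overline V=\infty$ and the degenerate branch $P(\sfr>b)=1$. Your part 2, with the strict gap $\delta=\min_{b\neq b^{\star}}g_{\infty}(b)-v^{\star}>0$ over the finite action set, likewise makes rigorous what the paper only asserts. What your approach buys is rigor at the weakest point of the paper's proof (the limit interchange); what the paper's buys is brevity and a constructive tie-in to the value-iteration scheme it actually proposes to run. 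One loose end shared by both treatments: the convergence $P(r_t>b)\to P(\sfr>b)$ requires each action $b$ to be a continuity point of the law of $\sfr$, and since the finite action set is given in advance, neither you nor the paper can simply choose the actions to be such points.
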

\begin{proof}
Let $V^0=0$, and assume that $\bar V^k = \lim_{t \to \infty}V^k(t)$ exists for some $k$. Then from definitions used in Theorem \ref{thm:struct}, we have
\begin{align*}
\bar V^{k+1}&=\lim_{t \to \infty}\cL V^k(t),\\
&=\lim_{t \to \infty}\min_{b(t)}[c(t,b(t))+P(r_t>b(t))V^k(t+b)],\\
&=\min_{b}[ c^{\star}(b) +P(\sfr >b)\bar V^k],
\end{align*}
where the last equality follows since $r_t$ converges in distribution to $\sfr$, and from the bounded convergence theorem 
$$\lim_{t\to \infty}c(t,b(t)) \to  c^{\star}(b).$$
Essentially, since $r_t$ converges in distribution to $\sfr$, the per-stage/running cost $c(t,b(t))$ becomes independent of $t$ as $t\rightarrow \infty$ (similar to the case when OFF periods are exponentially distributed).
Hence by convergence of $V^{k}$ to $V^*$ by Theorem \ref{thm:struct}, the limit $\bar V = \lim_{t \to \infty}V^*(t)$ exists. Thus, there exists a constant deterministic policy as $t \to \infty$ which 
we denote by $b^{\star}$. This gives $\bar V= c^{\star}(b)+P(\sfr > b)\bar V$ which on rearranging gives (i). (ii) can be obtained by noting that $b^{\star}$ performs better than any other policy so the optimal solution $b(t)$ must tend to $b^{\star}$ as $t \to \infty$
\end{proof}
Therefore, if the residual OFF period distribution converges in time, then the optimal sensing duration converges to a constant after sufficiently long time. 
\begin{remark} The condition in Theorem \ref{thm:convrestime} is trivially true for exponentially distributed OFF period. A more non-trivial example is when the OFF period has hyper-exponential distribution, for which the residual OFF time $r_t$ converges in distribution to some $\sfr$.
\end{remark}
\begin{remark} Theorem \ref{thm:struct} and Theorem \ref{thm:convrestime} are similar to Propositions III.2 and III.3 in \cite{azad2011optimal}.
\end{remark}
\subsection{Example}
Next, we consider an example where both the OFF and ON periods are uniformly distributed between $[0,L_f]$ and $[0,L_o]$, respectively. To use Theorem \ref{thm:struct}, with sensing duration $b(t)$, we write down the sensing cost $c(t,b(t))$ and the probability $P(r_t>b(t))$ that at the next sensing epoch we again encounter an OFF period. 

For OFF period distributed uniformly between $[0,L_f]$, the residual OFF time distribution 
\begin{equation}\label{eq:unifrt}
F_{r_t}(x) = P(r_t > x) = P(X > x | X>t) = \frac{L_f-x}{L_f-t},
\end{equation}
for $t \le x \le L_f$. Thus, $P(r_t>b(t)) = \frac{L_f-b(t)}{L_f-t}$, and to compute cost $c(t,b(t))$, we calculate 
$$\bbE\{(b(t) - r_t)\1_{b(t) \ge r_t, Y> b(t)-r_t}\} = \frac{\frac{L_o^2}{6} + (1-b(t))\frac{L_o}{2}}{L_f-t},$$
and
$$
\bbE\{ Y \1_{(Y \le b(t)- r_t)}\} = \frac{b(t)^2 - b(t)(L_f-t) + \frac{L_f^2+L_ft + t^2}{3}}{2L_o},$$ where the total cost $c(t,b(t)) = c_s + \bbE\{(b(t) - r_t)\1_{b(t) \ge r_t, Y> b(t)-r_t}\} + \bbE\{ Y \1_{(Y \le b(t)- r_t)}\}$.
 
Hence to solve for the optimal sensing durations $b(t)$ via the value iteration method, we start with $V=0$, and write 
$V^{k+1}=\cL V^{k}$, where
  $$\cL V(t)=\min_{b(t)} [c(t,b(t))+P(r_t>b)V(t+b(t)) ],$$ where we substitute for $c(t,b(t))$ from above. From Theorem \ref{thm:struct}, these iterations converge to the optimal policy.
\section{Learning Framework}
In Sections \ref{sec:sysmod} and  \ref{sec:rest}, we have derived the optimal sensing duration assuming the knowledge of the distribution of the OFF and ON periods. In practice, learning these distributions is a problem in its own right. To obviate the need for exactly learning the distribution (might take a long training time), in this section, under the general model of Section \ref{sec:sysmod}, we present a continuous-armed bandit problem type formulation \cite{kleinberg2004nearly,auer2007improved}, where the algorithm learns the best sensing duration without explicitly knowing the underlying OFF and ON period distribution. For ease of exposition, we will assume that the OFF and ON periods are exponentially distributed with unknown parameters. The analysis carries over to all distributions for which the cost $V$ has continuous second derivatives.

An online learning  algorithm chooses one possible sensing duration $b$ for each session (defined earlier), and receives a reward that counts for the useful ON time and the cost incurred (lost ON period time and sensing cost). Depending on past choices of $b$, the algorithm modulates its choice of $b$ in future sessions in pursuit of larger rewards.
  \begin{figure}[h]
\centering
\includegraphics[width=0.44\textwidth]{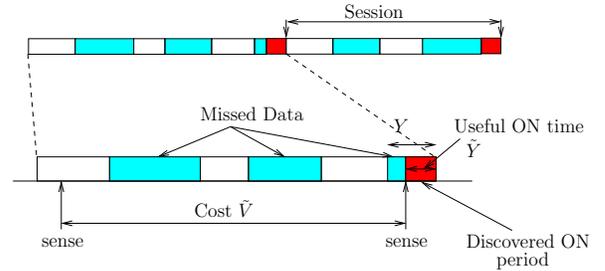}
\caption{Illustration of sessions for learning algorithm \label{fig:learning} }
\end{figure}

The reward in session $i$ is $U_i = {\tilde Y} - {\tilde V}$, where as shown in Fig. \ref{fig:learning}, ${\tilde Y}$ is the length of the useful ON period (or discovered ON time), and ${\tilde V}$ is the random variable whose expectation is cost \eqref{eq:VEXP}, that counts the sensing cost and missed ON periods in each session. Note that there could be multiple sensing epochs in each session, and the sensing cost of each session is $c_s$ times the number of sensing epochs at which an OFF period is sensed in that session. 
Let OFF period  $\sim EXP(\lambda_d)$ and ON period $\sim EXP(\lambda_c)$, with unknown parameters 
$\lambda_d$ and $\lambda_c$, where we assume that $\lambda_d$ and $\lambda_c$ are such that the optimal $b^{\star} \in [0, b_{max}]$ from Corollary \ref{cor:optbexpexp}.

The online algorithm's objective is to minimize the expected regret, $$\min_{b(i), i=1,\dots,T} \bbE\{R(T)\},$$ by choosing action $b(i)$ is session $i$,
and
\begin{equation}\label{eq:regret}
R(T) = \sum_{i=1}^T U_i^{\star} - \sum_{i=1}^T U_i,
\end{equation}
where $U_i^{\star}$ is the optimal reward knowing $\lambda_d$ and $\lambda_c$, i.e. playing optimal $b$ from Corollary \ref{cor:optbexpexp} in each session, and $T$ is the time horizon. 

The learning algorithm called the {\bf OnlineLearning} \cite{auer2007improved} to find the sensing duration $b$ to minimize the expected regret is given at the top of the page.
\begin{table}
\begin{tabular}{r l}
\hline
& \textbf{{\bf OnlineLearning}}\\
\hline
1	&	\text{Choose $n$} \\
2	&	\text{Divide $[0, b_{max}]$ into $n$ intervals $I_k =b_{max}[\frac{k-1}{n}, \frac{k}{n}]$, $0\le k\le n$}\\
3	&	\text{For each $I_k$, choose a point (sensing duration $b$) uniformly at random}\\
4	&	\text{For $i=1:T$}\\
5	&	\text{Choose that interval $I_k$ that maximizes ${\hat U}_k + \sqrt\frac{2 \ln i}{t_k}$}, \\ 
& \text{where ${\hat U}_k$ is the average (empirical) reward obtained from points} \\  &\text{in interval $I_k$
so far, and $t_k$ is the number of times interval $I_k$ has} \\
&\text{been chosen till session $i$ and 
$i$ is the overall number of sessions so far}\\
6	&	\text{Choose a point uniformly at random from the chosen interval $I_k$.}\\
\hline
\end{tabular} 
\end{table}

\begin{lemma}(Theorem $1$ \cite{auer2007improved})\label{lem:learning} 
If the expected reward given a strategy has continuous second derivatives, and finite number of maximas, then the expected regret obtained by the {\bf OnlineLearning} algorithm is bounded as follows,
$$\bbE\{R(T)\} \le {\cal O}\sqrt{T\log T},$$ for $n=\left(\frac{T}{\log T}\right)^{1/4}$.
\end{lemma}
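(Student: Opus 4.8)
The plan is to reduce the statement to a direct instantiation of the continuum-armed bandit result of \cite{auer2007improved} (its Theorem~1), so the real work is verifying that the reward landscape of our sensing problem meets that theorem's two hypotheses. First I would identify the expected per-session reward as a function of the single design parameter $b\in[0,b_{max}]$. Writing $g(b)=\bbE\{U_i\mid b\}=\bbE\{\tilde Y\mid b\}-\bbE\{\tilde V\mid b\}$, I would use the memorylessness of the exponential ON period to argue that, conditioned on a sensing epoch landing inside an ON period (a discovered ON period), the residual ON time is again $EXP(\lambda_c)$, so $\bbE\{\tilde Y\mid b\}=1/\lambda_c$ is constant in $b$; by the paper's identification $\bbE\{\tilde V\mid b\}=V(b)$ of \eqref{eq:VEXP}. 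Hence $g(b)=1/\lambda_c-V(b)$, and maximizing reward is exactly minimizing the cost $V(b)$ already solved in Corollary~\ref{cor:optbexpexp}.

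Second, I would check the two regularity conditions the cited theorem requires. For smoothness: from \eqref{eq:VEXP}, $V(b)$ is a ratio whose numerator and denominator are finite combinations of polynomials and exponentials in $b$, with denominator $\frac{\lambda_d}{\lambda_c+\lambda_d}(1-e^{-(\lambda_d+\lambda_c)b})$ strictly positive for $b>0$; hence $g$ is $C^\infty$, in particular $C^2$, on any interval $[\epsilon,b_{max}]$ bounded away from $0$. For the finite-maxima condition: the computation following Corollary~\ref{cor:optbexpexp} shows $V''(b)\ge 0$, so $V$ is convex and the solution of \eqref{eq:lambert:1} is its unique global minimizer; equivalently $g$ has a single (hence finitely many) maximizer $b^\star$. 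These are precisely the hypotheses of Theorem~1 of \cite{auer2007improved}.

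Third, with the hypotheses in place I would invoke that theorem for the \textbf{OnlineLearning} procedure, which is UCB1 run over the $n$ discretization intervals $I_k$. Its regret splits into a discretization bias that decreases with $n$ and a finite-arm UCB regret that increases with $n$; balancing the two terms (as carried out inside the cited theorem) yields the stated optimal grid size $n=(T/\log T)^{1/4}$ and the bound $\bbE\{R(T)\}\le \cO(\sqrt{T\log T})$. I would not reproduce the exponent bookkeeping, since it is inherited verbatim from \cite{auer2007improved}.

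The main obstacle I anticipate is not the balancing of $n$ but establishing that the hypotheses genuinely transfer to our per-session rewards. Two points need care. The reward $U_i=\tilde Y-\tilde V$ is built from exponential (hence unbounded) quantities and from a random number of sensing epochs per session, so strictly one must supply a boundedness or sub-exponential concentration argument before the UCB confidence radii $\sqrt{2\ln i / t_k}$ of \cite{auer2007improved} are justified; I would handle this by truncating the exponential tails and absorbing the truncation error into the $\cO(\cdot)$ term. Moreover, the $C^2$ claim must hold for the true averaged reward $g(b)$, not merely for the closed form $V(b)$, and the denominator of \eqref{eq:VEXP} degenerates as $b\to 0$; I would restrict attention to an interval $[\epsilon,b_{max}]$ containing $b^\star$, on which both smoothness and uniqueness of the maximizer are clean.
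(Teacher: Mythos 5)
Your proposal is correct in substance, but note that the paper offers essentially no proof of this lemma: it is imported verbatim as Theorem~1 of \cite{auer2007improved}, and the hypothesis-checking that occupies most of your write-up (that $\bbE\{{\tilde Y}\mid b\}=1/\lambda_c$ is constant in $b$, that $V(b)$ from \eqref{eq:VEXP} is $C^2$, and that the solution of \eqref{eq:lambert:1} is the unique optimizer) is, in the paper, the proof of the unnamed theorem \emph{following} the lemma, not of the lemma itself. The lemma as stated is purely conditional --- \emph{if} the expected reward is $C^2$ with finitely many maxima, \emph{then} the discretized-UCB regret is $\cO(\sqrt{T\log T})$ --- so the only part of your argument that addresses the statement itself is your third step: discretization bias of order $T/n^2$ against the gap-dependent UCB term of order $n^2\log T$, balanced at $n=(T/\log T)^{1/4}$, all of which is inherited from the cited work exactly as the paper intends. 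Where your proposal genuinely goes beyond the paper is your observation that the per-session rewards here are unbounded (exponential ON/OFF durations and a random number of sensing epochs per session), whereas Theorem~1 of \cite{auer2007improved} assumes payoffs in $[0,1]$ so that the confidence radii $\sqrt{2\ln i/t_k}$ are justified; the paper silently drops this boundedness hypothesis both from the lemma statement and from its later application, and your truncation/concentration fix is exactly what a rigorous treatment would need. One small simplification: your restriction to $[\epsilon,b_{max}]$ is not really necessary for the finite-maxima hypothesis, since $V(b)\to\infty$ as $b\to 0^{+}$ (the denominator of \eqref{eq:VEXP} vanishes while the numerator tends to $c_s>0$), so the reward maximizer is interior and the endpoint degeneracy never interferes with the conditions near the maximum, which is all the cited theorem uses.
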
  
Thus, the average regret $\left(\frac{\bbE\{R(T)\}}{T}\right)$ goes to zero with the {\bf OnlineLearning} algorithm even without knowing the underlying distributions.

\begin{theorem} Using the {\bf OnlineLearning} algorithm, the normalized regret $\frac{\min_{b(i), i=1,\dots,T} \bbE\{R(T)\}}{T}$   goes to zero with increasing number of sessions for finding the optimal sensing duration without the knowledge of OFF and ON period parameters $\lambda_d$ and $\lambda_c$.
\end{theorem}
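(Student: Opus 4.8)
The plan is to recognize that the final statement is a verification-and-invoke step: once I confirm that the per-session expected reward, viewed as a function of the sensing duration $b$, satisfies the regularity hypotheses of Lemma \ref{lem:learning}, the conclusion follows by simply dividing the resulting $\cO(\sqrt{T\log T})$ regret bound by $T$. The entire content of the proof is therefore reducing this theorem to the already-cited regret guarantee for the \textbf{OnlineLearning} algorithm.

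First I would identify the reward function explicitly. For a fixed arm $b$, the expected per-session reward is $\bbE\{U_i \mid b\} = \bbE\{\tilde Y \mid b\} - \bbE\{\tilde V \mid b\}$. By the memoryless property of the exponential ON period, the residual ON time measured from the discovering sense is again $EXP(\lambda_c)$, so $\bbE\{\tilde Y \mid b\} = 1/\lambda_c$ is a constant independent of $b$; and by construction $\bbE\{\tilde V \mid b\} = V(b)$ is exactly the cost \eqref{eq:VEXP} derived in Corollary \ref{cor:optbexpexp}. Hence the expected reward equals $1/\lambda_c - V(b)$, and all its regularity properties are inherited from those of $V(b)$.

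Second, I would check the two hypotheses of Lemma \ref{lem:learning}. For smoothness, on $(0,b_{max}]$ the denominator $\frac{\lambda_d}{\lambda_c+\lambda_d}(1-e^{-(\lambda_d+\lambda_c)b})$ of \eqref{eq:VEXP} is strictly positive, so $V(b)$ is a ratio of entire functions with nonvanishing denominator and is therefore real-analytic, in particular $C^2$. For finiteness of maxima, Corollary \ref{cor:optbexpexp} already records that $\frac{\mathrm{d}^2}{\mathrm{d}b^2}V \ge 0$, so $V$ is convex on $[0,b_{max}]$ with the unique minimizer $b^{\star}$ satisfying \eqref{eq:lambert:1}; consequently $1/\lambda_c - V(b)$ is concave with a single maximizer, giving finitely many maxima. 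With both hypotheses in hand, Lemma \ref{lem:learning} applies verbatim with $n = (T/\log T)^{1/4}$, yielding $\bbE\{R(T)\} \le \cO(\sqrt{T\log T})$ for the regret of \eqref{eq:regret}. Since $U_i^{\star}$ is the optimal reward we have $\bbE\{R(T)\} \ge 0$, so $0 \le \min_{b(i)} \bbE\{R(T)\} \le \cO(\sqrt{T\log T})$, and dividing by $T$ gives normalized regret $\cO(\sqrt{\log T / T}) \to 0$ as $T\to\infty$, which is the claim.

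I expect the main obstacle to be the smoothness verification at the boundary $b\to 0^+$, where \eqref{eq:VEXP} is a $0/0$ indeterminate form: the constancy of $\bbE\{\tilde Y\}$ and the convexity of $V$ are immediate from earlier results, but confirming that $V$ genuinely extends to a $C^2$ function up to $b=0$ requires a careful Taylor expansion of both numerator and denominator to matching order. If one prefers to sidestep the boundary entirely, restricting the arm space to $[\epsilon,b_{max}]$ for small $\epsilon>0$ (which still contains $b^{\star}$ in the parameter regime assumed before Corollary \ref{cor:optbexpexp}) removes the singularity while leaving the optimal arm and the asymptotic regret rate unchanged.
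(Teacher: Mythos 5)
Your proof follows the same route as the paper's: the paper's entire argument is the two-line verification that $V$ in \eqref{eq:VEXP} has continuous second derivatives and finitely many maxima, that $\bbE\{{\tilde Y}\}$ is independent of $b$ by memorylessness, and then an invocation of Lemma \ref{lem:learning}; you do exactly this, just with the constants ($\bbE\{{\tilde Y}\}=1/\lambda_c$), the convexity argument for uniqueness of the maximizer, and the division by $T$ spelled out.

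One correction to your closing paragraph: the behavior of \eqref{eq:VEXP} as $b\to 0^+$ is not a $0/0$ indeterminate form. The numerator tends to $c_s>0$ while the denominator $\frac{\lambda_d}{\lambda_c+\lambda_d}\left(1-e^{-(\lambda_d+\lambda_c)b}\right)$ tends to $0$, so $V(b)\to+\infty$ and the expected reward $1/\lambda_c - V(b)\to-\infty$. Consequently $V$ does \emph{not} extend to a $C^2$ function at $b=0$, and the ``careful Taylor expansion'' you propose would refute, not confirm, that extension. This makes your fallback --- restricting the arm space to $[\epsilon, b_{max}]$ with $b^{\star}\in[\epsilon,b_{max}]$ --- not an optional convenience but the necessary step for Lemma \ref{lem:learning} to apply (its hypotheses, and indeed any UCB-type analysis, require the mean reward to be well defined and smooth on the whole strategy interval, and the empirical rewards to be controlled, which fails when arms near $0$ incur unboundedly many senses per session). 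Notably, the paper's own proof silently asserts $C^2$-smoothness over the entire range and thus glosses over exactly this point, so on this detail your write-up, once the $0/0$ claim is replaced by the divergence observation, is the more careful of the two.
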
  
 \begin{proof} Note that the expectation $V$ \eqref{eq:VEXP} of the cost function ${\tilde V}$  has continuous second derivatives and finite number of maximas for a fixed strategy $b$, and $\bbE\{{\tilde Y}\}$ does not depend on $b$ because of memoryless property of exponential distribution. Thus, the expected reward $\bbE\{U_i\}$ given $b$ has continuous second derivatives and finite number of maximas, and the result follows from Lemma \ref{lem:learning}.
 \end{proof}
  
  In Fig. \ref{fig:sim:1}, we demonstrate the performance of the {\bf OnlineLearning} via simulation. We use $\lambda_d$ and $\lambda_c$ such that the expected OFF period length and ON period length is $3$ and $2$, respectively, 
 and plot the optimal sensing duration and sensing duration discovered by {\bf OnlineLearning} algorithm as function of the sensing cost $c_s$. We see that the {\bf OnlineLearning} algorithm closely tracks the theoretical optimum computed by Corollary~\ref{cor:optbexpexp}. Furthemore, the average cost incurred while connecting to the AP, closely matches for the two algorithms, as shown in Fig.~\ref{fig:sim:2}.

\begin{figure}
\includegraphics[scale=0.5]{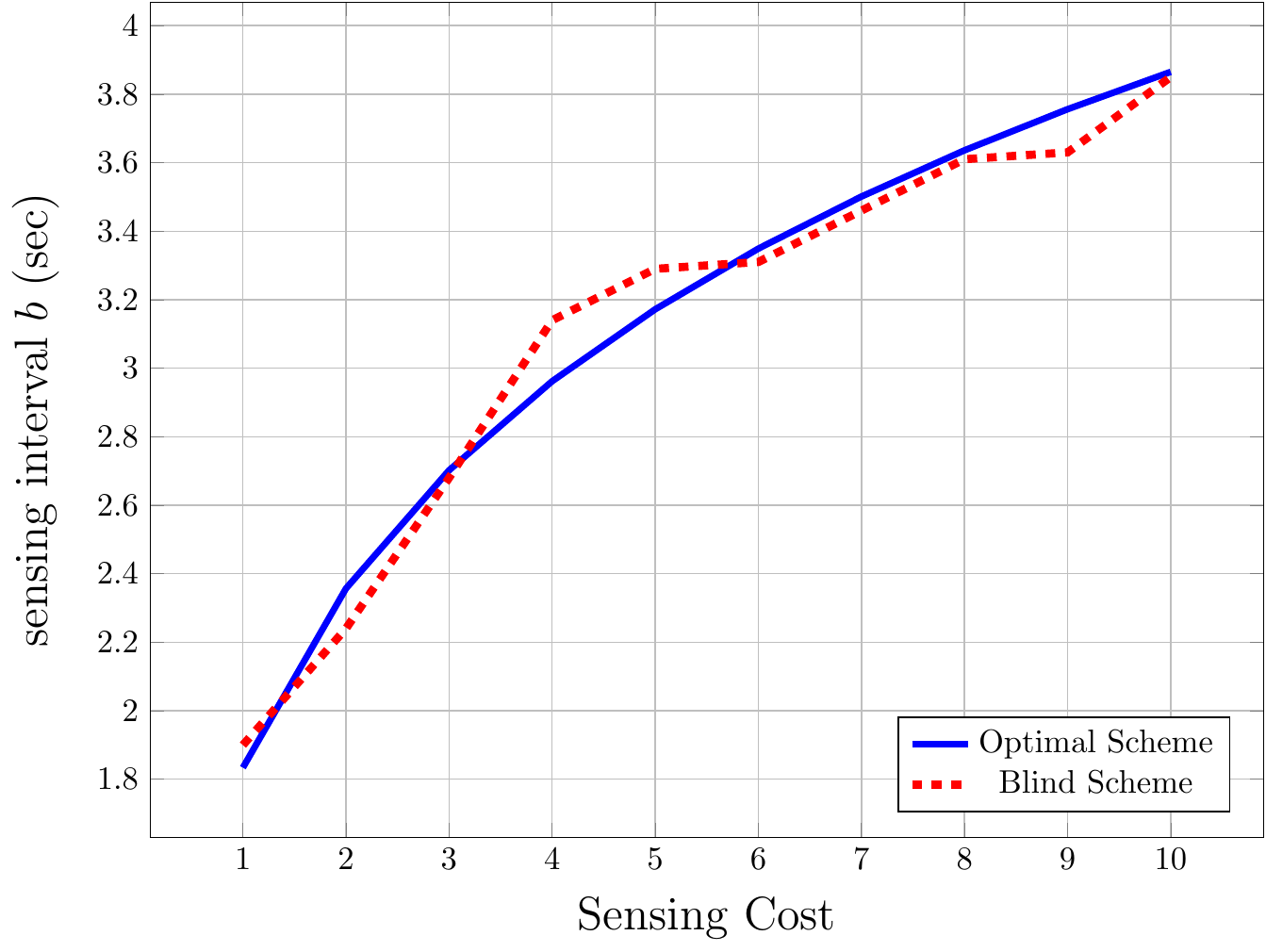}
\caption{Optimal sensing interval vs sensing cost \label{fig:sim:1}}
\end{figure}
  
\begin{figure}
\includegraphics[scale=0.5]{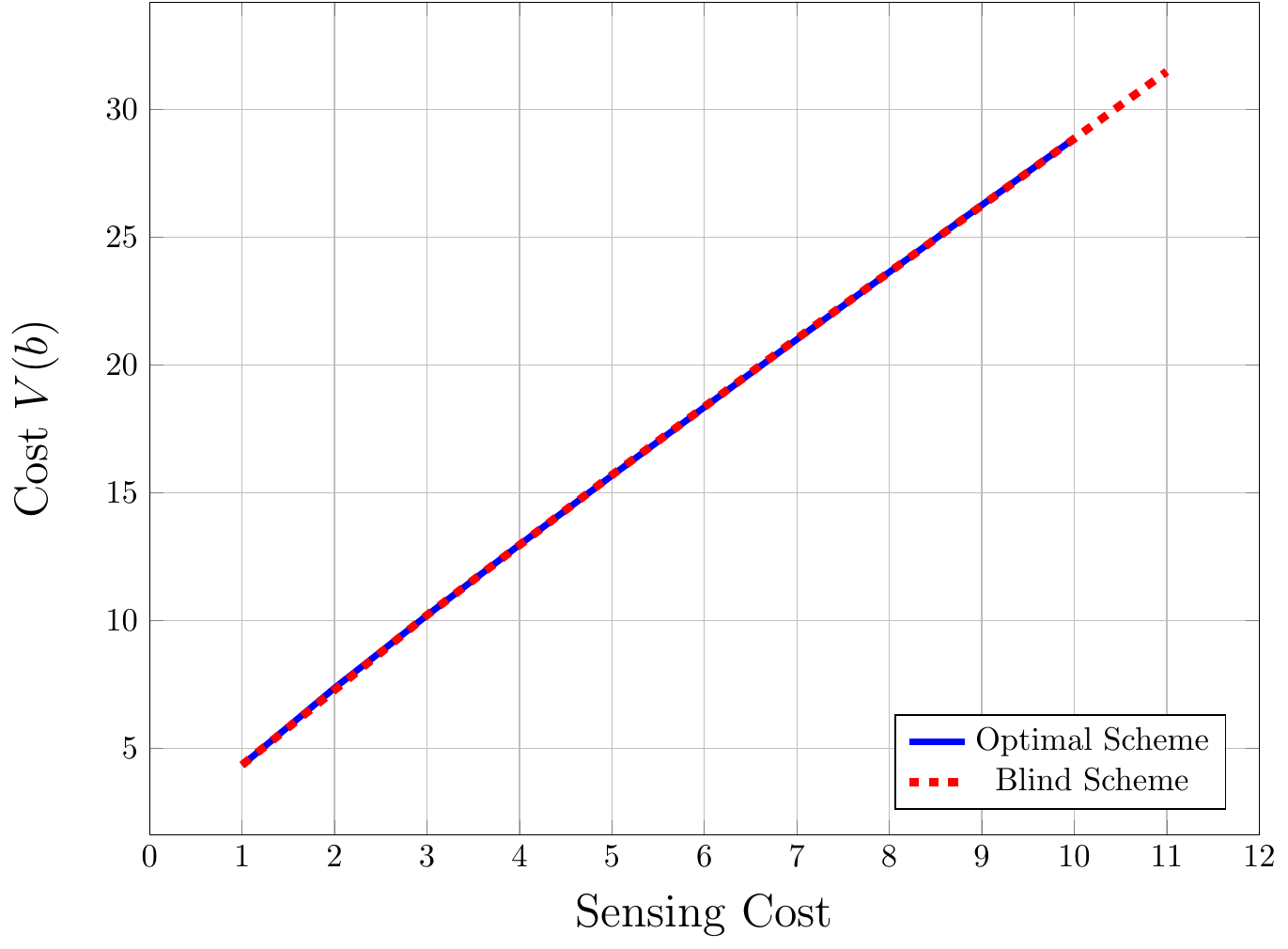}
\caption{Average  cost vs sensing cost\label{fig:sim:2}}
\end{figure}

\bibliographystyle{IEEEtran}
\bibliography{../IEEEabrv,../Research}

\end{document}